\let\csname equation*\endcsname\relax
\let\csname endequation*\endcsname\relax
\newcolumntype{L}[1]{>{\raggedright\let\newline\\\arraybackslash\hspace{0pt}}m{#1}}
\newcolumntype{C}[1]{>{\centering\let\newline\\\arraybackslash\hspace{0pt}}m{#1}}
\newcolumntype{R}[1]{>{\raggedleft\let\newline\\\arraybackslash\hspace{0pt}}m{#1}}
\newenvironment{definition}[1][Definition]{\begin{trivlist}
\item[\hskip \labelsep {\bfseries #1}]}{\end{trivlist}}
\newenvironment{proof}[1][Proof]{\begin{trivlist}
\item[\hskip \labelsep {\bfseries #1}]}{\end{trivlist}}
\newtheorem{theorem}{Theorem}[section]
\newcommand{\la}{\langle}
\newcommand{\ra}{\rangle}
\newcommand{\lb}{\left(}
\newcommand{\rb}{\right)} 
\begin{document}

\title[Quantum Control Landscapes are Almost Always Trap Free]{Quantum Control Landscapes are Almost Always Trap Free: A Geometric Assessment}

\author{Benjamin Russell}
\address{Department of Chemistry, Princeton University}
\ead{br6@princeton.edu}

\author{Herschel Rabitz}
\address{Department of Chemistry, Princeton University}
\ead{hrabitz@princeton.edu}

\author{Re-Bing Wu}
\address{Department of Automation, Tsinghua University}
\ead{rbwu@tsinghua.edu.cn}

\date{\today}

\begin{abstract}
A proof is presented that almost all closed, finite dimensional quantum systems have trap free (i.e., free from local optima) landscapes for a large and physically general class of circumstances.
This result offers an explanation for why gradient-based methods succeed so frequently in quantum control.
The role of singular controls is analyzed using geometric tools in the case of the control of the propagator, and thus in the case of observables as well.
Singular controls have been implicated as a source of landscape traps.
The conditions under which singular controls can introduce traps, and thus interrupt the progress of a control optimization, are discussed and a geometrical characterization of the issue is presented.
It is shown that a control being singular is not sufficient to cause control optimization progress to halt, and sufficient conditions for a trap free landscape are presented.
It is further shown that the local surjectivity (full rank) assumption of landscape analysis can be refined to the condition that the end-point map is transverse to each of the level sets of the fidelity function.
This mild condition is shown to be sufficient for a quantum system's landscape to be trap free.
The control landscape is shown to be trap free for all but a null set of Hamiltonians using a geometric technique based on the parametric transversality theorem.
Numerical evidence confirming this analysis is also presented.
This new result is the analogue of the work of Altifini, wherein it was shown that controllability holds for all but a null set of quantum systems in the dipole approximation.
These collective results indicate that the availability of adequate control resources remains the most physically relevant issue for achieving high fidelity control performance while also avoiding landscape traps.
\end{abstract}

\maketitle

\section{Introduction: Control Landscape Analysis}

The study of quantum control landscapes (i.e., specific objective cost that depends on the of quantum time evolution operator as a function of an external field) is a topic of prime interest for assessing the viability of reaching a desired control outcome.
As background, prior work has focused on applying differential geometry to several issues in quantum optimal control and quantum mechanics generally \cite{me1, me2, rage, BroA, BroB, BroC, ACAR, GQM, NFINQSL}.
Other recent work \cite{alt} has also focused on applying geometric and Lie algebraic methods to controllability in quantum systems.
Following similar principles, in this paper geometric methods are brought to bear on the analysis of quantum control landscapes.
Specifically, we apply the geometric understanding of the `size' (\emph{measure}) of the set of singular values of smooth maps (Sard's theorem \cite{Sard}) to quantum control landscapes.
Further, we apply the parametric transversality theorem, (\cite{singbook} lemma 1, chapter 2) to reach the main conclusion.
In particular, the present work facilitates understanding of the wide prevalence of trap free quantum control landscapes seen in practice, while also accommodating the fact that quantum control landscapes possessing traps appear highly atypical.

There has been success, in both experiment \cite{Soa, num2, num1} and extensive simulation \cite{sqcs, sevcon, kmev1, kmev2} with the application of gradient based (or other local gradient estimation) algorithms in quantum control.
The gradient ascent algorithm is sensitive the critical topology of the function being optimized.
In particular, gradient ascent can converge to a local optimum if started from specific initial conditions within the basin of attraction.
Consideration of this prospect garnered some controversy.
In this regard, some potential issues have also been identified \cite{sscl, atql} and debated \cite{coarql} which may affect convergence to a global optima on the control landscape for specific systems.
It has been shown that the presence of singular controls \cite{Wu, sqcs, sscl, atql, coarql} can encumber the gradient ascent procedure in some very specific systems, all of which had to be specially constructed to have this unfavorable property.
For the definition of singular controls, see sec. (\ref{singsec}) and the appendix (\ref{singapp}).
Particularly the case of constant, or even zero, control fields has received attention \cite{sscl, atql}, partly because of the tractability of analyzing this case.
Some examples of non-constant singular controls have been found numerically in \cite{Wu}, but they were all found to be saddles rather than local optima.
There is also mounting numerical evidence that situations where singular controls inhibit progress during the gradient ascent procedure are, in several senses, rare \cite{Wu, Riv, an1, num2}.
This collected evidence for the common ease of quantum control optimization, and the evident rarity of landscape traps motivated the pursuit of the central result of the present paper.

Finally we note that this work does not address the \emph{rate} of convergence to the global optima and that this issue is also dependent on the particular search algorithm employed.
Some numerical efforts demonstrated that a favorable rate of convergence is typically the case and remains so as the number of levels rises \cite{searcheff, kmev1, kmev2}.

\subsection{Summary of The Central Theorem Established in this Paper}

Here we first summarize the key finding in this paper, followed by a detailed proof and discussion of the result.
In particular we show that only a null set of quantum systems (within the space of all systems with a given number of levels) possess traps caused by singular critical controls.
The sense in which this set is null can be understood as follows: \emph{if quantum systems are generated at random} there is zero probability of finding an example with any singular traps.
For a rigorous introduction to the analysis of measures and null sets, see \cite{fran}.

The crux of the proof rests on a novel application of the parametric transverality theorem from differential geometry \cite{hirsch}.
This result has significant implications for quantum control, which can be expressed informally by the following statement:

\vspace{0.5cm}
\fbox{\parbox{0.9\textwidth}{
Consider a parameterized family of time dependent Hamiltonians $H[E_n, \lambda_m]$ which depend on finitely many control variables $E_n$ and finitely many real additional parameters $\lambda_m$.
If, for all $E_n$ and $\lambda_m$ (other than those corresponding to the global optima of the fidelity function), it is possible to increase fidelity by applying variations $\delta E_n, \delta \lambda_m$ to $E_n$ and $\lambda_m$ respectively, then it follows that the landscape as a function only of $E_n$, is trap free for \emph{almost all} fixed values of $\lambda_m$ (i.e., for all but a null set of $\lambda_m$ values).
}}
\vspace{1cm}

This informal statement assumes that there is a physically suitable fidelity function quantifying the `cost' of a Hamiltonian driven time evolution.
For the above statement to hold, it is also required that the chosen fidelity function does not possess traps of its own, i.e. built into its mathematical structure.
This circumstance is the case for many fidelity functions in popular use \cite{cjw}.
An example of a parameterized family of quantum systems, which could be assessed using the above boxed statement is given below:
\begin{align}
  H[E_n, \lambda_m](t) & =
  \begin{pmatrix}
	\lambda_0 & \lambda_1 \\ \lambda_1 & -\lambda_0
  \end{pmatrix}  
  + E_0 \sin(t)
  \begin{pmatrix}
	0 & 1 \\ 1 & 0
  \end{pmatrix} \\ \nonumber
  & + E_1 \cos(t)
  \begin{pmatrix}
	0 & i \\ -i & 0
  \end{pmatrix}
  + \left(E_2 \sin(2t) + E_3 \cos(3t)\right)
  \begin{pmatrix}
	1 & 0 \\ 0 & -1
  \end{pmatrix}
\end{align}
The key implication of the results in this work is that gradient ascent methods will almost always succeed (independent of any initial guess for a control) for optimizing quantum dynamics (i.e., discovering a pulse of maximal fidelity) for almost all quantum systems.
This conclusion applies to quantum systems with any finite number of levels.
The result has practical significance for finding optimal control fields in simulation as well as in learning control experiments \cite{in2} attempting to discover shaped pulses which maximize fidelity in the laboratory.
Such experiments, whether guided with a gradient procedure or an other suitable method are essentially a laboratory realization of the analogous simulation, wherein, a control pulse is systematically updated using the algorithmic rule until a high fidelity pulse is found.
Henceforth in this work, the distinction between a control function $E(t)$, and vector of control variables $E_m$ will be suppressed unless the distinction is vital.

No attempt is made in this work to address the size of the basin of attraction of local traps in any given quantum control landscape.
This interesting and important question is assessed in \cite{sqcs} where it is found that the attractor basin of traps in a few specific examples are small in an appropriate measured sense.

\subsection{Quantum Systems and the Goals of Control}

This paper studies the control of the quantum propagator for finite level quantum systems with Hamiltonians of the form:
\begin{align}
	\label{schcon}
	\hat{H}(t) = \hat{H}_0 + E(t) \hat{H}_c
\end{align}
where both the drift $H_0$ and control $H_c$ Hamiltonians are, respectively, traceless and $E(t)$ is drawn from a finite, but potentially very high dimensional parameterized space of time dependent control fields.
The traceless condition is taken in order that only evolution operators $U_t \in SU(n)$ need to be considered rather than $U(n)$, which contains information about a physically redundant overall phase.
For a more detailed discussion about the distinction between $U(n)$ and $SU(n)$ with respect to singular controls and the effect of this specialization on the control landscape see \cite{sscl, jdom}.

Initially this work studies the \emph{maximally} controlled system, for which every Hamiltonian matrix element considered as under control; in this circumstance it is reasonably shown that the landscape is trap free.
Importantly, incrementally reducing this latter high degree of control back down to the form of (\ref{schcon}) is shown to introduce \emph{no} traps, unless the reduction procedure is engineered to do so by fixing particular matrix elements to specific values within a set shown to be null at each step of the procedure.

In the case of controlling the propagator of a quantum system, the fidelity can be represented by a functional $F_1$ to be optimized:
\begin{align}
	\label{fun_1}
	F_1[E] = \Re\left\{Tr\left(\hat{G}^{\dagger}\hat{U}_T\right) \right\}
\end{align}
where $T$ is some fixed final time and $G$ a target unitary.
A phase independent version of this function is given by:
\begin{align}
	\label{fun_2}
	F_2[E] = \left|Tr\left(\hat{G}^{\dagger}\hat{U}_T\right)\right|^2
\end{align}
For several applications of this functional and further discussion see \cite{sscl, jdom}.
These functionals both have the form:
\begin{align}
F[E] = J(V_T[E])
\end{align}
where $J:SU(n) \rightarrow \mathbb{R}$ is defined as $J(U) = \left|Tr\left(\hat{G}^{\dagger}\hat{U}\right)\right|^2$ (for example) and $V_T$ is the end-point map, that assigns to a control $E$ to the final time propagator $V_T[E]=U_T$ which solves the corresponding Sch\"ordinger equation at time $T$.
The overall conclusion of the paper applies equally to both cost functions $F_1, F_2$ which have been shown to both possess only global and local maxima and saddle points \cite{jdom}.

Throughout this paper $T$ will be assumed to be large enough for the system to be fixed time controllable, that is to say, for any end-point unitary $U_T$, there is a control $E$ which implements it in time $T$.
This property is also referred to as \emph{accessibility}, it compliments the property of straightforward controllability, which states that: for each end-point unitary $U$, the exists a control $E$ which implements $U$ at \emph{some} final time $T$.
i.e., there exists at least one control $E$ such that $U_T = G$ for any given $G$.
A proof of the existence of such a time $T$, which includes our present case, is given in (\cite{jur} theorem 30, Ch. 3).

There are several common objectives in quantum control.
These can be broadly classified physically as:

\begin{enumerate}
	\item \label{prob1} Maximizing the expectation of a given (Hermitian) observable $\hat{O}$ at time $T$
	\item \label{prob2} Controlling an initial quantum state $|\psi_{I} \rangle$ to reach a desired quantum state $|\psi_{F} \rangle$ at time $T$
	\item \label{prob3} Controlling the quantum propagator $\hat{U}_T$ so that it is driven to a desired goal $\hat{G} \in SU(n)$.
In quantum information sciences applications $\hat{G}$ is typically interpreted as implementing a quantum gate.
\end{enumerate}
Although we will focus on the third of these objectives, the conclusions in this work apply to all of the above tasks as the objective function in the cases (\ref{prob1}), (\ref{prob2}) can be expressed as a function of the quantum propagator $U_T$.
For a discussion of the correspondence between tasks (\ref{prob2}) and (\ref{prob3}) from a geometric perspective, see appendix (\ref{geocon}).

\subsection{The Gradient Algorithm and the Prospect of Encountering Traps}

Employment of the gradient algorithm to determine a control that maximizes fidelity naturally requires the computation of the gradient of the functional $F$.
The gradient of the functionals (\ref{fun_1}), (\ref{fun_2}) can be computed in closed form \cite{cqp}.
The variation of the end point w.r.t. the full Hamiltonian is given by:
\begin{align}
\delta U_T = U_T \int_0^{T} U_t^{\dagger} i \delta H(t) U_t dt
\end{align}
For more complex forms of control coupling, the gradient is not as simple when non-linear coupling to a control field is included, as in the case of control via polarizability \cite{mepol}.
However, this case is essentially the same as the first variation of the end point map with respect to the control in  the linear coupling case, except, $\delta H(t)$ takes a different form.
For further discussion of gradient decent/ascent methods see \cite{grad}.
Employment of the gradient algorithm is important for understanding the topology of both the set of critical points, and that of the local (if any exist) and global maxima, as exemplified in figure (\ref{gfig}).
\begin{figure}[!ht]
  \centering
    \reflectbox{
 \includegraphics[width=0.25\textwidth]{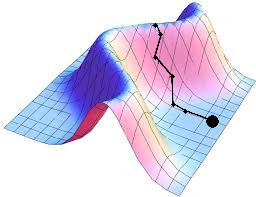}
 \includegraphics[width=0.4\textwidth]{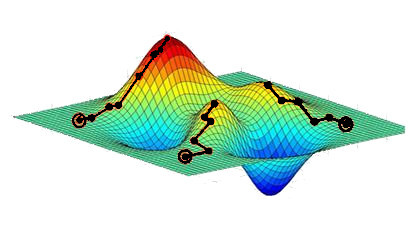}}
  \caption{Gradient algorithms can converge to local optima, or \emph{traps}, if they exist (left), but will reach a global maximum point if no traps exist (right).}
\label{gfig}
\end{figure}
Here we fix terminology relating to traps in the control landscape of a quantum system.
\begin{definition}
A control $E$ is called a \emph{critical control} or a \emph{critical point} w.r.t. a given $F=J \circ V_T$, if $\frac{\delta F}{\delta E} = 0$.
\end{definition}
\begin{definition}
A control $E$ is called a \emph{a second order critical point} w.r.t. a given $F=J \circ V_T$ if $\frac{\delta F}{\delta E} = 0$ (i.e., a critical control) and the Hessian $\frac{\delta^2 F}{\delta E^2}$ is negative semi-definite.
Such an $E$ may or may not be a true local optimum (i.e., a trap) depending on the nature of the higher derivatives with respect to $E$.
\end{definition}
\begin{definition}
A control $E$ is called a \emph{trap} w.r.t. a given $F=J \circ V_T$ if it is a local, but not global optima of the same $F$.
\end{definition}

\subsection{The Three Assumptions of Landscape Analysis}

There are three key assumptions of landscape analysis (see table \ref{axiomstab}), which are known to be sufficient for a quantum control landscape to be trap free.
A clear statement, both in control theoretic and differential geometric terms, can be found in section (\ref{axioms}).
A key result in much of the recent work in landscape analysis is that these three assumptions imply that the gradient algorithm will converge to a globally optimal control without getting `stuck' in a local optima (i.e., a trap).
There is numerical evidence \cite{kmev1, kmev2} that this hypothesis holds with a wide scope of validity as well as experimental evidence \cite{num2, num1, sun1, sun2}.

Earlier important work \cite{alt} has shown that the controllability assumption (1) in Table (\ref{axiomstab}) discussed later is almost always satisfied given a pair $(H_0, H_c)$.
That is, the set of Hamiltonians for which controllability fails is a null set.
It is however, an open problem to completely characterize the set on which controllability fails.
The two most widely applied and useful criteria for controllability are the Lie algebra rank condition (LARC) and the connected graph criterion \cite{alt}.
A clear example for numerically checking controllability by the Lie algebra rank condition can be found in \cite{db} where an insightful graphical representation of the process is presented.
However, only limited work has been performed on the the satisfaction of the local surjectivity assumption (\ref{singsec}) in Table (\ref{axiomstab}) and its impact on the performance of gradient algorithms.
In this work we present an analogous result to that of \cite{alt} for controllability,  which applies to the local surjectivity assumption (\ref{singsec}).
In addition, the final assumption (3), is that there are sufficient control resources (i.e., the control space is sufficient for the end-point map to be globally surjective) to freely explore the landscape.
The new result, considering all three assumptions, explains why gradient ascent convergence to a global optimum is so typical in practice despite the fact that some engineered special examples with traps are known.

\section{Singular Controls and Properties of the End-Point Map}

\label{singsec}

\subsection{Kinematic and Dynamical Optima}

The scenario here concerns discovering a control scheme driving the time evolution operator $U_t$ to a desired goal $G$ at time $T$ (i.e., $U_T = G$).
This process is represented by the following commutative diagram,
\begin{figure}[h!]
	\begin{center}
		\begin{tikzcd}
		    C \arrow[]{r}{V_T} \arrow[bend right=30, swap]{rr}{F = J \circ V_T} & SU(n) \arrow[]{r}{J} & {[0,1]}
	    \end{tikzcd} 
	\end{center}
\caption{The chain rule connects the dynamical $F$ and kinematic $J$ landscape}
\label{chainrule}
\end{figure}
$J:SU(n) \rightarrow [0,1]$ is an objective function to maximize, and $C$ is a pre-defined space of control fields.
Applying the chain rule to $F=J \circ V_T$ yields:
\begin{align}
	\label{chain}
	\frac{\delta F}{\delta E} = \frac{d J}{d V_T[E]} \circ \frac{\delta{V_T}[E]}{\delta E}
\end{align}
\begin{definition}
A control $E$ is said to be \emph{singular} if the set of all $\delta V_T$ for which there exists a corresponding $\delta E$ (i.e., a value of $\delta V_T[\delta E]$) doesn't span $T_{V_T[E]}SU(n)$ at the point $V_T[E]$.
This is to say that the Fr\'echet derivative:
\begin{align}
\delta V_T \big|_{E} : \delta E \mapsto \delta U_T
\end{align}
is not of maximal rank at the point $E$ in the space of controls.
The co-dimension of the image of $dV_T \big|_{E}$ given by $\text{dim}(SU(n)) - \text{rank}(dV_{T} \big|_{E})$ is called the \emph{co-rank} of the control.
\end{definition}
There are two types of critical points of $F$: ones for which $\frac{d J}{d V_t} = 0$, and those for which $\frac{\delta{V_T}}{\delta E}$ is not of full rank in $T_{V_T[E]}SU(n)$.
Consideration of $J$ is referred to as the kinematic control landscape and $F$ is referred to as the dynamic control landscape.
Figure (\ref{chainrule}) clarifies the relationship between the kinematical and dynamical landscapes.
Ultimately, the goal is to understand if singularities of $V_T$ can introduce critical points of $F$ which are not critical points of $J$, i.e. singular controls which introduce new critical points into the landscape of $F$.
Understanding which systems have no such singular critical controls will elucidate the circumstances for which the critical point structure of $F$ and $J$ are the same.
For these systems an analysis of the kinematic landscape alone suffices to understand the full control landscape of $F$.
This is a desirable goal as it facilitates reaching the conclusion that one only needs to consider the prospect of traps arising directly from a fidelity function, and thus that an appropriate choice of fidelity function $J$ is sufficient to result in a trap free landscape \cite{cjw}.

With the remarks above in mind we make the following definitions and observation:
\begin{definition}
	A \emph{Kinematic Critical Point} is a control $E$ such that $\frac{d J}{d V_t[E]} = 0$.
\end{definition}
\begin{definition}
	A \emph{Dynamic Critical Point} is a control $E$ such that $\frac{\delta F}{\delta E}=0$.
\end{definition}
It is clear from eqn. (\ref{chain}) that all kinematic critical points are dynamic but that the converse is not true unless $\frac{ \delta V_T[E]}{\delta E}$ is full rank for all $E \in C$.
It is important to understand the nature of both types of critical points.
Ultimately, the most salient question is: do singular controls introduce local optima into quantum control landscapes and if so, what is the practical ramifications of this both in simulations and laboratory learning control \cite{in2} scenarios?

\subsection{$V_T$ is a smooth map}

Here properties of $V_T$ needed in order to apply the parametric transversality theorem are established.
Considering controls only drawn from either $C^{\infty}([0,T])$ or any finite dimensional vector space of smooth functions, $V_t$ has the properties of being a smooth function of $t$, a smooth function of $E$ and a smooth function of $H_0$ and $H_c$.
We use this fact without giving a proof, however a proof can be given by using the so called `convenient calculus' \cite{concalc}.
The proof is long and involved as well as a direct parallel of many existing proofs, so it is omitted.
We further observe, by the smoothness of the matrix exponential and of matrix multiplication, that the end-point map is also smooth on any space of piecewise smooth controls.
This fact will be required in ensuing geometric analysis.

\section{Climbing the Landscape: Transverality to the Level Sets of $J$ is Sufficient to Climb}

This section will show the failure of local surjectivity will not necessarily cause a gradient assent to halt.
We show that a significantly weaker condition rather than local surjectivity is sufficient to exclude traps, namely: $V_T$ being transverse to the level sets of the fidelity function.
We also argue that the set of Hamiltonians for which a search will halt are a null set under some physically reasonable assumptions.

\subsection{Transversality}

Firstly, the concept of transversality is introduced as an abstract property of smooth maps between manifolds.
Secondly, the end point map is shown to possess the property of transversality by taking specific instances of the manifolds in the definition of transversality.

The gradient of a smooth function on a smooth manifold is always perpendicular to level sets of this function (if the same Riemannian metric is used throughout).
If local surjectivity of $V_T$ fails somewhere on a specific level set of $J$ on $SU(n)$, it may not matter as far as climbing the landscape when using gradient assent. All that matters is an ability to ascend the landscape, not necessarily to traverse the level set itself.
If there does not exist a $\delta E$ which causes $V_T[E] = U_T$ to vary in a specific direction within the level set of $J$ containing $V_T$, this is not problematic for gradient ascent, but would still indicate the failure of surjectivity.
A control may be singular, even up to co-rank of the dimension of the level sets of $J$ containing $V_T[E]$, but as long as there exists a $\delta E$ such that $\delta (V_t[E])$ has a non-zero component in the direction of the gradient $\nabla J$ on $SU(n)$, then it will not impair the ability to climb the landscape (i.e., increase $F$) by introducing a small variation of $\delta E$.
With this in mind, we state the following definition of a transverse map:
\begin{definition}[Transverse map \cite{singbook}], Given two smooth manifolds, $M, N$, a submanifold $L \leq N$, and a  smooth map $\phi: M \rightarrow N$ (with $L \subset \text{Image}(\phi)$), $\phi$ is called \emph{transverse} to $L$ (denoted $\phi \pitchfork L$):
\begin{align}
\text{Image}\left(d \phi \big|_{p} \right) \oplus T_{\phi(p)} L = T_{\phi(p)} M \ \ \forall p \in \phi^{-1}(L)
\end{align}
\end{definition}
The concept above is illustrated in figure (\ref{fig:trans}).
In this work, only the case that $\phi$ is globally surjective (i.e., an onto function) will be important.
This case corresponds to only considering controllable systems.
This renders redundant the condition that $L$ is a subset of the image of $\phi$ because the image of $V_T$ is the whole of $SU(n)$ for quantum systems which are fixed time controllable.
The results reported in \cite{alt} and (Theorem 12, Chap. 6 \cite{jur}), can be paraphrased as: for almost all $a, b \in \mathfrak{su}(n)$, $V_T$ is globally surjective.

\begin{figure}[h]
	\centering
	\includegraphics[scale=0.18]{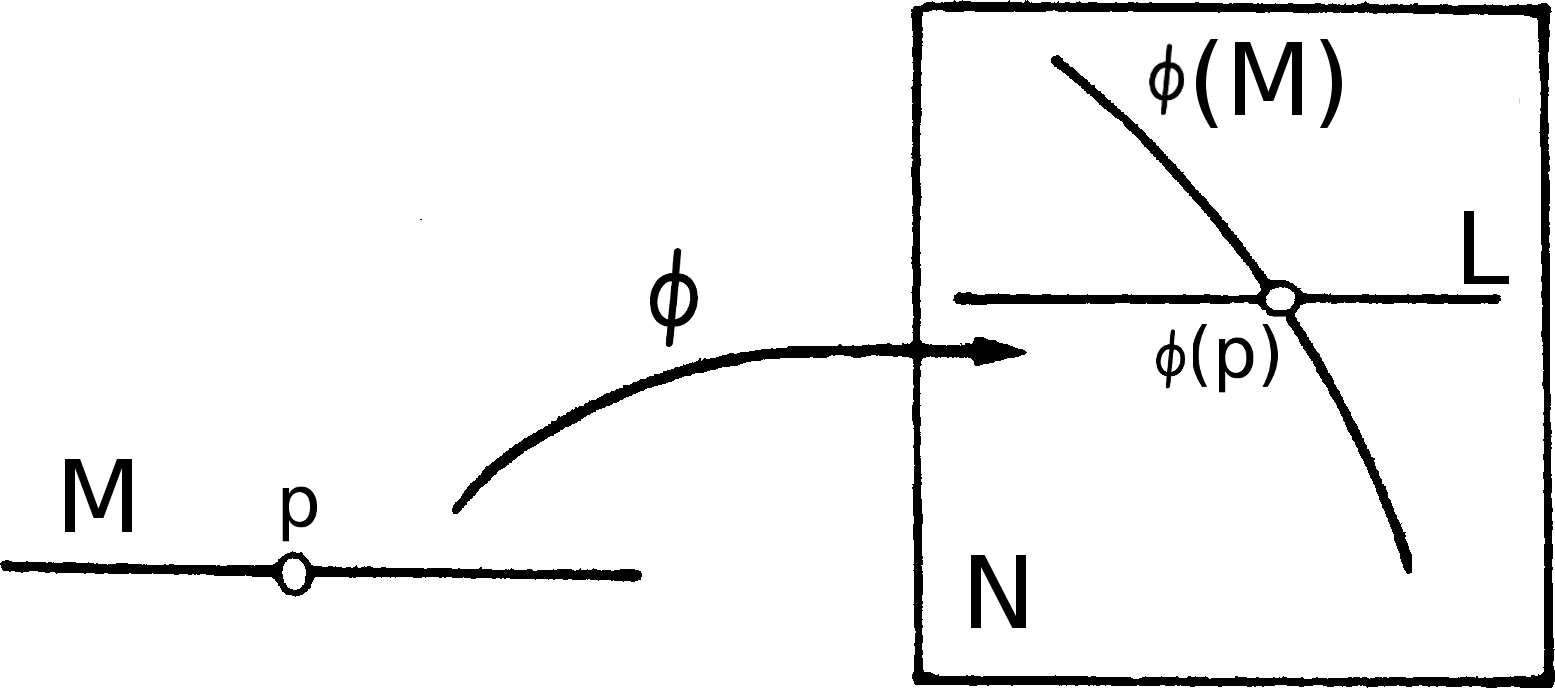}
\caption{Illustration of transversality $\phi \pitchfork L$.
There exists a $\delta p$ such that varying the point $p$ in the domain to $p+\delta p$, the value $\phi(p)$ is varied in a direction $\delta \phi(p)$ steering $\phi(p)$ away from $L$.
Note that in this example $M,L$ are $1$ dimensional and $N$ is $2$ dimensional.
This circumstance, selected for graphical purposes, is in contrast to quantum control applications for the property of $V_T$, where the source manifold (the control space) typically has far higher dimension than the target ($SU(n)$).}
\label{fig:trans}
\end{figure}

\section{Applying the Parametric Transversality Theorem to Quantum Control}

In this section we will utilize the parametric transversality theorem as the key to facilitate proving the rarity of quantum systems with traps.
The proof of this theorem (which follows from Sard's theorem \cite{Sard}) is complex and is omitted, but well known within differential geometry.

Here the statement of the parametric transversality theorem is given prior to it's application to quantum control.
\begin{theorem}[Parametric Transversality Theorem, \cite{singbook} lemma 1, chapter 2] \hfill \break
Given smooth manifolds, $M, N$ and $S$, a submanifold $L \leq N$, and a parameterized family of smooth maps $\phi_s: M \rightarrow N$ where $s \in S$ (parameterized by $s$), then if $\phi:M \times S \rightarrow N$ defined by: $\phi(p,s):=\psi_s(p)$ is transverse to $L$ (when variations of $(p,s)$ are considered), then almost all values of $s \in S$ have $\psi_s$ transverse to $L$.
Furthermore, the pre-image of $L$ is a submanifold of $M$ with codimension in $M$ equal to the codimension of $L$ in $N$.
\end{theorem}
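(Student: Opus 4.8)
The plan is to reduce the statement to Sard's theorem by converting the transversality of the total map $\phi$ into a statement about the regular values of a projection; the bad parameters $s$ --- those for which $\psi_s$ fails to meet $L$ transversally --- will turn out to be exactly the critical values of a naturally associated smooth map, and Sard then guarantees that these form a null set. The first step is to invoke the preimage (regular value) form of transversality: since $\phi: M \times S \rightarrow N$ is transverse to $L$, the set $W := \phi^{-1}(L)$ is an embedded submanifold of $M \times S$ whose codimension equals the codimension of $L$ in $N$. This follows by composing $\phi$ locally with a submersion onto the normal directions of $L$ and applying the implicit function theorem.

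Next I would introduce the projection $\pi: W \rightarrow S$, $\pi(p,s) = s$, which is smooth as the restriction of the ambient projection $M \times S \rightarrow S$. Applying Sard's theorem to $\pi$ shows that almost every $s \in S$ is a regular value of $\pi$. It then remains only to connect the regular values of $\pi$ with the desired transversality of the individual slice maps $\psi_s$.

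The heart of the proof, and the step I expect to be the main obstacle, is the linear-algebra lemma: \emph{$s$ is a regular value of $\pi$ if and only if $\psi_s \pitchfork L$}. To establish it I would fix $(p,s) \in W$, write $a = \phi(p,s) \in L$, and pass to the normal quotient $Q := T_a N / T_a L$ with projection $\rho$. Decomposing $\rho \circ d\phi_{(p,s)}(u,v) = \bar A u + \bar B v$ into its $M$- and $S$-contributions, transversality of $\phi$ reads $\mathrm{Im}\,\bar A + \mathrm{Im}\,\bar B = Q$, the tangent space $T_{(p,s)} W$ is the kernel of the map $(u,v) \mapsto \bar A u + \bar B v$, and $d\pi_{(p,s)}(u,v) = v$. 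A short computation then gives $\mathrm{Im}(d\pi_{(p,s)}) = \bar B^{-1}(\mathrm{Im}\,\bar A)$, so $d\pi_{(p,s)}$ is onto precisely when $\mathrm{Im}\,\bar B \subseteq \mathrm{Im}\,\bar A$; in the presence of $\phi \pitchfork L$ this is in turn equivalent to $\bar A$ being surjective, which is exactly transversality of $\psi_s$ to $L$ at $p$. Ranging over every $p$ in the (possibly empty) fibre of $\pi$ over $s$ then yields the lemma. The delicate part is the careful bookkeeping between the two transversality conditions and the kernel--image relations in the normal quotient.

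Combining the pieces completes the argument: almost every $s$ is a regular value of $\pi$, and for each such $s$ the lemma furnishes $\psi_s \pitchfork L$, which is the first assertion. For the final claim I would apply the preimage theorem a second time, now to a fixed transverse map $\psi_s: M \rightarrow N$, concluding that $\psi_s^{-1}(L)$ is a submanifold of $M$ whose codimension in $M$ equals the codimension of $L$ in $N$.
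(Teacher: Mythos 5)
Your proof is correct, and it is exactly the standard argument that the paper gestures at but does not give: the paper omits the proof entirely, saying only that the result ``follows from Sard's theorem'' and deferring to the cited literature. Your route --- the preimage theorem to realize $W = \phi^{-1}(L)$ as a submanifold of $M \times S$ of the right codimension, Sard's theorem applied to the projection $\pi : W \rightarrow S$, and the linear-algebra lemma in the normal quotient identifying regular values of $\pi$ with those $s$ for which $\psi_s \pitchfork L$ (including the vacuous case of an empty fibre) --- is precisely the proof found in the sources the paper cites, and your kernel--image bookkeeping, in particular the identification $\mathrm{Im}(d\pi) = \bar B^{-1}(\mathrm{Im}\,\bar A)$ and the equivalence of $\mathrm{Im}\,\bar B \subseteq \mathrm{Im}\,\bar A$ with surjectivity of $\bar A$ given $\mathrm{Im}\,\bar A + \mathrm{Im}\,\bar B = Q$, is sound.
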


The following result of differential geometry aids in building intuition about Theorem \ref{tcthm}:
\vspace{0.5cm}
\begin{center}
\fbox{\parbox{0.9\textwidth}{
Given a family of smooth maps $\phi_s : M \rightarrow N$ parameterized (by $s$ drawn from a smooth manifold), for almost all values of s, $\phi_s$ is transverse of a given submanifold $L$ if the family as a whole is transverse to the same $L$.
}}
\end{center}
\vspace{0.5cm}
It is also clarifying to note that if $L=N$, then a map being transverse to $L$ is equivalent to it being locally surjective.

The parametric transversality theorem can be applied to the case of quantum control landscapes for finite level systems.
One can set $M$ to be a (finite, large dim, manifold of control fields), $N$ to be $SU(n)$ and $L < SU(n)$ to be any level set of $J$, the cost functional.
In order to analyze the quantum control landscapes, $\phi: M \rightarrow N$ is given by $V_T$ (for some fixed $T$).
The parameter $s \in S$ (the set of maps $V_T$) can be taken to parameterize values in a set of (potentially time dependent) Hamiltonians.
For each time dependent Hamiltonian $iH(t)$, there is an end-point mapping $V_T$.
This is clarified in equation (\ref{maxham}) below and within the ensuing discussion.

\subsection{The Central Theorem}

\label{centhmsec}

In this section we apply the parametric transversality theorem to a large class of quantum control problems.
We show that this analysis allows one to conclude that only a null set of quantum systems have singular critical points.

In this section we denote by $C_{\kappa, p}$, a high but finite, dimensional space of control fields taken to consist of piecewise constant (with $p$ pieces) functions bounded in magnitude by $\kappa$.
We will denote by $\{g_k\}$ a basis for this space, and by $\{B_k\}$ an orthonormal basis of $\mathfrak{su}(n)$.

Consider, initially, the scenario of having \emph{total} control over a system's Hamiltonian $H(t)$ as a function of time (i.e., all matrix elements under control) within the confines of the space $C$, one can then show that successively restricting the degree of control does not generically introduce any singular critical points.
The ensuing physical and mathematical argument in no way rests on the assumption that all such Hamiltonians can be created in the laboratory in practice.
We initially postulate the existence of such a rich space of control fields and the full degree of coupling permitted by these fields, and then progressively restrict the degree of control while assessing the effect that each restriction has on traps in the landscape.
Specifically, we initially study every curve $iH(t) \in \mathfrak{su}(n)$ of the form:
\begin{align}
\label{maxham}
iH(t) = \sum_{j} f_j(t)B_j = \sum_{j,k} a_{j,k} \ g_k(t) B_j
\end{align}
as a Hamiltonian where $g_k$ is given by:
\begin{align}
g_k(t) =
\begin{cases}
1 \ \ \frac{Tk}{p} \leq t \leq \frac{T(k+1)}{p},\\
0 \ \ \text{else}
\end{cases}
\end{align}
The case of a Hamiltonian with drift $H_0$ and control $H_c$, referred to in the single (piecewise constant) field dipole approximation, is included within the above form (\ref{maxham}).
It is now clear that there exists $p,\kappa$ such that the end-point map is surjective as a map from $C_{\kappa,p}$ for some values of $\kappa, p$.
Applying a variation $\delta a_{j,k}$, the end point change becomes:
\begin{align}
U_T^{\dagger} \delta U_T & = \int_{0}^{T} U_t^{\dagger} \delta \left( i H(t) \right) U_t \ dt \\ \nonumber
& = \sum_{j,k} \int_{0}^{T} \delta a_{j,k} \ g_{k}(t) \  U_t^{\dagger} B_j U_t \ dt
\end{align}
Assuming that $\{B_k\}$ is an orthonormal basis of $\mathfrak{su}(n)$, the duration of each piecewise constant segment of the control is short enough (i.e. $p$ is large enough) and the speed of the curve $U_t$ is not too high (i.e. $\kappa$ is small enough), then the set $\left\{ \int_{I_k} U_t^{\dagger} B_j U_t \ dt \right\}$ is also a (not necessarily orthonormal) basis of $\mathfrak{su}(n)$.
Here $I_k$ is the $k^{\text{th}}$ piece defined by $\frac{Tk}{p} \leq t \leq \frac{T(k+1)}{p}$.
To understand exactly what is meant be `short enough', one must understand the singularities of the matrix exponential (\ref{meapp}).
This implies that any variation $\delta U_T$ can be created by some variation $\delta iH(t)$ admissible within this space of piecewise constant controls, and thus that the end-point map is locally surjective everywhere on this space of controls.

Adopting the above premises, one can now apply the parametric transversality theorem to conclude that fixing the value of any one of the control the parameters $a_{j^{'},k^{'}}$ to a given constant will not introduce singular critical points into the landscape for all but a null set of fixed values.

\begin{theorem}
\label{tcthm}
For a system with with Hamiltonian of the form (\ref{maxham}) and space of piecewise constant controls $C_{\kappa, p}$ as above such that the end-point map is locally surjective everywhere in the control space, fixing any single control parameter $a_{j^{'}, k^{'}}=K$ introduces singular critical points into the new control landscape (a function of the remaining, unfixed variables only) only for a null set of values of $K \in \mathbb{R}$.
\end{theorem}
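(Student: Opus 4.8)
The plan is to cast the statement as a direct application of the parametric transversality theorem, with the single fixed coefficient playing the role of the parameter $s$. Concretely, I would set $M$ to be the manifold of the remaining, unfixed coefficients $\{a_{j,k}\}_{(j,k)\neq(j',k')}$ (an open box in a Euclidean space), $S=\mathbb{R}$ the axis along which the fixed value $K=a_{j',k'}$ ranges, $N=SU(n)$, and $\phi=V_T:M\times S\to SU(n)$ the total end-point map of (\ref{maxham}). The individual maps are then $\phi_K=V_T\big|_{a_{j',k'}=K}$, whose landscape is precisely the restricted landscape named in the statement.

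The first substantive step is to translate ``introduces a singular critical point'' into a transversality failure. Using the chain rule (\ref{chain}), a dynamical critical point of $F_K=J\circ\phi_K$ that is not a kinematic one is exactly a control $p$ at which $\mathrm{Image}(d\phi_K|_p)\subseteq\ker dJ|_{\phi_K(p)}=T_{\phi_K(p)}L$, where $L=J^{-1}(J(\phi_K(p)))$ is the level set of $J$ through the image point. Away from the critical set $\mathrm{Crit}(J)$ these level sets are smooth hypersurfaces, so the containment above is precisely the failure of $\phi_K\pitchfork L$. Hence the absence of singular critical controls is equivalent to $\phi_K$ being transverse to every regular level set of $J$.

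Next I would feed this into the parametric transversality theorem. The paragraph preceding the statement guarantees that on $C_{\kappa,p}$ the total map $\phi$ is locally surjective, i.e. a submersion; a submersion is transverse to every submanifold of its target, so $\phi\pitchfork L_c$ for each level set $L_c$. The theorem then yields, for each \emph{fixed} regular value $c$, that the set $Z_c\subseteq\mathbb{R}$ of values $K$ for which $\phi_K\not\pitchfork L_c$ is a null set, while Sard's theorem applied to $J$ shows that the critical values of $J$ themselves form a null set that may be discarded at the outset.

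The main obstacle, and the part needing the most care, is the passage from ``for each single leaf'' to ``for all leaves simultaneously.'' The exceptional set one actually wants to control is $\bigcup_{c}Z_c$ over the continuum of regular values $c$, and a union of uncountably many null sets need not be null; indeed a generic submersion $M\times S\to\mathbb{R}$ (in the role of $J\circ\phi$) carries a whole curve of fibrewise critical points that projects onto an interval of parameter values, so submersivity of $\phi$ alone cannot suffice. I would therefore study the total singular-critical locus $B=\{(p,K)\in M\times S:\ d_pF(p,K)=0,\ \phi(p,K)\notin\mathrm{Crit}(J)\}$ directly and try to prove that its projection $\pi_S(B)$ to the $K$-axis is null. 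The route I would pursue is to exploit structure of $V_T$ beyond mere submersivity: the variation formula $U_T^{\dagger}\delta U_T=\sum_{j,k}\int_0^T \delta a_{j,k}\,g_k(t)\,U_t^{\dagger}B_jU_t\,dt$ makes $V_T$ real-analytic in all the $a_{j,k}$, so $B$ is a real-analytic set, and the submersivity of $\phi$ forces $\mathrm{Image}(d_p\phi)=T_{\phi(p,K)}L$ exactly (corank one) along $B$. Combining these, I would aim to show that $\pi_S|_B$ is nowhere a submersion and then invoke Sard so that $\pi_S(B)$ is null. Making this final reduction rigorous is where I expect the bulk of the difficulty to lie, and it is the step that genuinely distinguishes the physical end-point map from an arbitrary smooth family.
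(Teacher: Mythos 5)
Your setup coincides exactly with the paper's: the paper also takes $M$ to be the space of unfixed coefficients $a_{j,k}$, $S$ the real line of values of the selected coefficient $a_{j',k'}$, $N = SU(n)$, and $\phi = V_T$. But the paper's proof then takes a much shorter route than yours and never touches the level sets of $J$: from the hypothesis that the full map is locally surjective everywhere it asserts $V_T \pitchfork SU(n)$, hence transversality to every submanifold of $SU(n)$, and applies the parametric transversality theorem a single time with $L = SU(n)$ itself, invoking its earlier remark that transversality to the whole target is equivalent to local surjectivity. It thereby concludes something stronger than the theorem asks: for all but a null set of $K$ the restricted map is locally surjective everywhere, so it has no singular controls at all, in particular no singular critical points. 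Your proposal instead works leaf by leaf with the level sets $L_c$, which is the refined condition the paper advocates elsewhere but does not use in this proof.

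As a proof, your proposal has a genuine gap, and you say so yourself: the key lemma --- that the projection $\pi_S|_B$ of the singular-critical locus is nowhere a submersion (or, by any other means, that $\pi_S(B)$ is null) --- is never established, and nothing in your text shows how the analytic structure of $V_T$ would deliver it. However, the obstruction you flag is real, and this is the substantive point of comparison: the paper's own argument does not overcome it either. Under the standard definition of transversality every map is transverse to $L = N$, so the parametric transversality theorem applied with $L = SU(n)$ has vacuous content; and the implication the paper actually needs, ``the total family is a submersion $\Rightarrow$ almost every member of the family is a submersion (or has no singular critical points),'' is false for general smooth families. For instance, with $M = \mathbb{R}^2$, $S = \mathbb{R}$, $N = \mathbb{R}^2$, $\phi(p_1,p_2,s) = (p_1,s)$ and $J(x,y) = y$, the total map is a submersion, yet for \emph{every} $s$ the composite $J \circ \phi_s$ is constant while $dJ \neq 0$, so every point of $M$ is a singular critical point. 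Hence any correct proof must exploit structure of the end-point map beyond the rank of its total differential --- precisely your closing observation. In short: your route is the more careful one and your flagged difficulty is a genuine mathematical obstruction rather than a misunderstanding, but as submitted your argument, like the paper's, does not yet establish the theorem; what is missing is exactly the unproven step that $\pi_S(B)$ has measure zero.
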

\begin{proof}
Let $V_T[iH(t)]:=U_T$ be the end point map of the Schr\"odinger equation.
This map can be considered to depend on $a_{j,k}$, as these parameters are sufficient to determine $iH(t)$, so we will denote the end-point map $V_T[a_{j,k}]$ accordingly.
As such, $V_T: C_{\kappa,p} \times \mathfrak{su}(n) \rightarrow SU(n)$ (i.e., one control field for each $B_j$) where we have taken $C_{\kappa, p}$ to be sufficient for $V_T$ to be surjective as explained above. 
This is equivalent to saying: $V_T \pitchfork SU(n)$, which in turn implies: $V_T$ is transverse to every sub-manifold of $SU(n)$.
Henceforth, $C$ will have $(\kappa, p)$ left unwritten, assumed to have these values taken so that $V_T$ is locally surjective everywhere in its domain.

We can now decompose the domain of $V_T$ into two parts by selecting one parameter $a_{j^{'}, k^{'}}$, and considering $V_{T}(a_{j^{'}, k^{'}})[a_{j,k}]$ as a one parameter family of maps of the remaining variables $a_{j, k}, \ i\neq i^{'}, \ j \neq j^{'}$.
One can now apply the parametric transversality theorem by considering the values of $a_{j^{'}, k^{'}}$ as $S$, the values of the remaining variables as $M$ and $SU(n)$ as $N$ and conclude that only a null set of values $a_{j^{'}, k^{'}}$ cause this new restricted map to fail to be locally surjective.
$\blacksquare$
\end{proof}

Given theorem (\ref{tcthm}), one can  assume that one value $a_{i_1,j_1}$ has been fixed to a specific value $\kappa$ not in the null set $N_1$ of values which introduce singular critical points.
Fixing another $a_{i_2,j_2}$, and applying an identical argument, one finds that only a null set $a_{i_2,j_2} \in N_2$, introduce singular critical points.
This process can now proceed inductively.
Note that these null sets $N_1, N_2, \ldots$ need not be the same set of values.
Proceeding by induction, one sees that fixing any number of the free parameters describing the system, does not introduce singular critical points except for a null set of fixed values at each step.
It may, however, cause the end-point map to fail to be \emph{globally} surjective, the property generically preserved by fixing parameters is that of being locally surjective within in image of $V_T$, without reference to what that image is.
This eventual breakdown of global surjectivity corresponds to loss of controllability and thus a shrinking of the reachable set.

We have shown that local surjectivity, which implies transversality to the level sets of any smooth function, only fails for a null set of quantum systems.
This technique can be applied to show that transversality of $V_T$ to the level sets of a specific objective function is also generic.
Considering a family of systems parameterized by a vector of numbers $\lambda$ (for example: the coupling constants of a spin chain system controlled by a magnetic field). 
Then, by a nearly identical argument, if $V_T(\lambda)[E]$ is transverse to a given level set $L$ of fidelity $J$ when variations of both $E$ and $\lambda$ are admissible, then all but a null set of fixed values of $\lambda$ leave $V_T[E]$ transverse to the same set when only variations of $E$ are considered.
We further note that the argument of the central theorem does not rest on the initial appeal to a space of piecewise constant controls.
The assumption of a finite dimensional space of controls is  physically unrestrictive as the space of controls can be given a very high dimension leaving the space of control fields still including all those which can be physically implemented.
One way to achieve this truncation of control space, is to only consider control fields with `bounded variation' (band limited in the terminology of Fourier series).
Such control fields possess no frequencies above a certain critical frequency.
If this critical frequency is high enough, then any discarded component of the control field would not affect the time evolution differently from noise.
An identical argument could be made if one started with any finite dimensional space of controls which can be shown to render the end-point map locally surjective.

The following table \ref{cortab} clarifies the correspondence between the abstract statement of the parametric transversality theorem (PTT) and its application to quantum control.

\begin{table}[h!]
\centering
\captionsetup{justification=centering}
\caption{Correspondence of objects for the quantum control application of the parametric transversality theorem}
\label{cortab}
\begin{tabular}{c | C{0.7\textwidth}}
Abstract PTT & Quantum Control Application \\ \hline
$M$ & Control Space \\ \hline
$N$ & $SU(n)$ \\ \hline
$S$ & Space of Fixed Values of Control \\ \hline
$L$ & Level set of $J$ (transversality of $V_T$), \newline or whole of $SU(n)$ (local surjevtivity) \\ \hline
$\phi(p,s)$  & $V_T[\lambda, E]$  \\ \hline
$\psi_s(p)$  & $V_T(\lambda)[E]$  (also written as $V_T(a_{j', k'})[a_{j,k}]$)
\end{tabular}
\end{table}

\subsection{Context and Physical Relevance}

The result of section (\ref{centhmsec}) is a parallel of the central result of Altifini \cite{alt}.
Altifiti's key result can be restated as: for \emph{almost all} $(a,b) \in \mathfrak{su}(n) \times \mathfrak{su}(n)$ (i.e., all but a null set) the map $V_T$ is \emph{globally} surjective (onto) on $SU(n)$.
Clearly, this result includes cases with more than one control field as it applies to the case with only a single control field and adding additional control fields cannot destroy controllability.
In the same spirit, the conclusion of section \ref{centhmsec} is that \emph{almost all} Hamiltonians will not produce singular traps.

It has been discussed in \cite{mepol} that while many practical control scenarios are described by a Hamiltonian in the dipole approximation, laboratory scenarios can include more complex forms of coupling.
As such, the mathematical existence of traps in systems in the dipole approximation, does not necessarily translate into the existence of traps in laboratory experiments.
Complex (non-linear) coupling of control fields within the system's Hamiltonian and coupling to the environment can be present in reality, even if they are only very small. 
Noise in the control field is also present in practice.
If fixed values of complex coupling to external fields are chosen at random, with certainty traps will not be present in the landscape.
This implies that, if a well validated model of a quantum system is shown to possess traps, even the least inaccuracy in the model or the presence of any additional types of coupling to the control fields, will very likely remove these traps by effectively perturbing the Hamiltonian matrix elements out of the failing set.

\subsection{Numerical Confirmation}

In order to confirm the conclusions, of the central Theorem \ref{tcthm}, numerical simulations were run.
$100$ pairs $a,b \in \mathfrak{su}(4)$ (i.e., four level systems) were generated at random and $100$ optimizations with random initial controls $E$ were run using a gradient ascent algorithm.
The control fields tested were defined to be piecewise constant with $1000$ pieces with appropriately bounded magnitude, as per the central result above.
Initially, a singular control $E$ was generated at random (see appendix \ref{singcon}) by choosing $B$ at random (i.e., a procedure specifically seeking a singular control, rather than one designed to avoid them).
The bounded magnitude premise of the central theorem was imposed via rejection sampling on $B$.
We will denote by $E[B]$ the control created from formula $\ref{singcon}$ with a given value of $B \in \mathfrak{su}(n)$ (with $B$ normalized s.t. $||B|| = 1$).
The time evolution simulation is given by:
\begin{align}
\frac{dU_t}{dt} = \left(a + E(t)b\right)U_t
\end{align}
Stochastic gradient ascent (over $B$) was then run to maximize the quantity:
\begin{align}
\left\langle B, \ U_T^{\dagger} \nabla J \big|_{U_T} \right\rangle
\end{align}
One readily confirms that this quantity will be maximized for a singular critical point and at no other point.
It was found that in all generated cases $a,b$, no singular critical controls consistent with the magnitude bound premise of Theorem \ref{tcthm} were discovered as the algorithm did not converge.
As many initial conditions were tested, this is strong numerical confirmation that such controls do not exist.

\section{Discussion and Conclusions}

Theorem \ref{tcthm} has been presented describing the structure of typical quantum control landscapes by introducing a novel tool from differential geometry extracted from the parameteric transversality theorem.
The technique used to obtain Theorem \ref{tcthm} is novel, and potentially has scope well beyond quantum control.
We have shown that quantum systems with singular critical controls are rare in the space of all possible quantum control systems, i.e., all systems evolving under the Schr\"odinger equation with some coupling to external control fields.
In order for the transversality result to apply specifically to the dipole approximation with a single field it would be required to show that the restrictions on the maximally controllable Hamiltonian do not leave the remaining Hamiltonian within the null set possessing traps.
However, as the set possessing traps is null, there exists a perturbation to any controllable system possessing traps, which removes them.
We have further formalized a sufficient condition for systems with singular critical points to possess no second order critical points (\ref{app2}).
Attempting to show that this condition almost always holds will be the focus of further work also based on the parametric transversality theorem.

We have refined the surjectivity assumption of quantum control landscape theory to that of transversality.
We argued that the end point map being transverse to all level sets of fidelity is sufficient for the dynamical and kinematical landscapes to share the same critical point structure.
A novel technique for showing that a very large class of realistic systems possess this property was also presented.

\label{axioms}

The original and current status of the three assumptions of quantum control landscape analysis are given here, so that the new and original forms and statues of each can be compared.

\begin{table}[h]
\centering
\resizebox{\textwidth}{!}{%
\begin{tabular}{l|l|l}
Original Assumptions & Current Assumptions & Status of Assumption \\ \hline
\begin{tabular}[c]{@{}l@{}} 1) Controllability:\\ $V_T$ is globally surjective\\ on the space of all controls fields \end{tabular} & \begin{tabular}[c]{@{}l@{}} 1) Controllability:\\ $V_T$ is globally surjective \\ on the space of all controls fields \end{tabular} & \begin{tabular}[c]{@{}l@{}}Shown to hold for\\  all but a null set of Hamiltonians \\ \cite{alt}, \cite{jur} (ch. 6.4, theorem 12, pg 188.) \end{tabular} \\ \hline
\begin{tabular}[c]{@{}l@{}} \textbf{2 a)} Local controllability:\\ $V_T$ is locally surjective.\\ i.e., $\frac{\delta V_T}{\delta E}$ is non-singular\end{tabular} & \begin{tabular}[c]{@{}l@{}} \textbf{2 b)} Transversality:\\ $V_T \pitchfork L_k$\\ for all level sets $L_K$ of $J$\end{tabular} & \begin{tabular}[c]{@{}l@{}}Both shown to hold for\\  all but a null set of\\ Hamiltonians\end{tabular} \\ \hline
\begin{tabular}[c]{@{}l@{}} 3) Resources: \\ Control resources sufficient \\ for $V_T$ to be globally surjective\end{tabular} & \begin{tabular}[c]{@{}l@{}} 3) Resources: \\ Control resources sufficient for $V_T$ \\ to be globally surjective \end{tabular} & Scenario dependent. \\
\end{tabular}
}
\caption{The three assumptions of quantum control landscape analysis expressed in their original and current forms. Assumption (2) has been significantly relaxed, although it is almost always satisfied for the original and current assumptions. The relaxation of the current assumption (2) greatly expands the set of Hamiltonians with no singular critical points.}
\label{axiomstab}
\label{tabas}
\end{table}
\FloatBarrier
It is possible to check that these assumptions imply that the dynamical landscape almost always possess the same critical point structure as the kinematical one.
This can be achieved by observing that the composition of two functions, $J(V_T[w])$, has the same critical point structure as that of $J$ if $V_T$ satisfies both transversality and globally surjectivity.
Finally, we note in the current assumptions of Table \ref{tabas} that assumptions 1 and 2, are \emph{almost always} satisfied.
As a result, it is clear that the primary determining factor for the ease for optimization in quantum control, for the vast majority of Hamiltonians, is the availability of sufficient control resources, assumption (3).

We also note that transversality only requires that the end-point map $V_T$ has rank 1 and the $\nabla J$ is contained within its range.
This is in contrast to the requirement that $V_T$ is \emph{full} rank, because full means that the rank is equal to the dimension of the group $SU(n)$, which is $n^2-1$.
The condition that local surjectivity be satisfied becomes more demanding on the rank of $V_T$ as the number of level rises, which is in contrast to transversality.

\section{Acknowledgments}

The authors would like to thank Robert Kosut for many helpful discussions and comments during the drafting of this work, and Daniel Burgarth for some corrections.
Benjamin Russell acknowledges the support of DOE (grant no. DE-FG02-02ER15344.
Herschel Rabitz acknowledges the support of the Army Research Office (Grant No. W911NF-16-1-0014).
We also acknowledge partial support from the Templeton foundation.

\section*{Bibliography}

\bibliographystyle{unsrt}
\bibliography{mybib}

\appendix
\section{Singular Controls in the Dipole Approximation and Beyond}

\label{singapp}

Here we give formulas for singular controls in the case of a single control field in the dipole approximation.
In order to express all relevant quantities in terms of Lie algebraic objects in $\mathfrak{su}(n)$ we make the following definitions:
\begin{align}
	\label{laf}
	& a := -i H_0 \\ \nonumber
	& b := -i H_c
\end{align}
Much of what is presented in this section and throughout is applicable, appropriately modified, to control systems of the analogous form on compact, connected, semi-simple Lie groups other than $SU(n)$.

In this notation, the Schr\"odinger equation for a controlled quantum system (\ref{schcon}) reads:
\begin{align}
	\label{schcons}
	\frac{d U(t)}{dt} = \left(a + E(t)b\right)U(t)
\end{align}
The authors are only aware of singular control trapping studies in the dipole approximation expressed here other than in a recent work \cite{mepol}.

It is possible to explicitly express $\delta V_T$ in terms of $\delta E$ \cite{in1};
\begin{align}
\delta V_T[w] = U(T) \int_{0}^{T} \delta E(t) U(t)^{\dagger} b U(t) dt
\end{align}
The right translation of the $U(T)^{\dagger} \delta U(T) \in \mathfrak{su}(n)$ is given by:
\begin{align}
\label{fm1}
U(T)^{\dagger} \delta U(T) & = \int_{0}^{T} \delta E(t) U(t)^{\dagger} b U(t) dt \\
& = \int_{0}^{T} \delta E(t) b_t dt \nonumber
\end{align}
where we have defined $b_t = U(t)^{\dagger} b U(t) = Ad_{U(t)}(b)$ so that it solves the adjoint equation $\frac{d b_t}{dt} = [b, H_t]$.
We note that this form (i.e., the integral of an adjoint orbit in a Lie algebra) is only possible because we have a Lie group of time evolution operators and that form is unique to such a scenario.

If the control $E$, which drove the system along the trajectory $U(t)$, is singular, then there exists, be definition, at least one $B U(T) \in T_{U(T)}SU(n)$ such that:
\begin{align}
	\label{fm2}
	\la B U(T,0), \delta U(T,0) \ra_{U(T,0)} = 0 \ \ \forall \delta E
\end{align}
where $\la, \ra_{U(T)}$ is any inner product on the tangent space $T_{U(T)}SU(n)$ and $B \in \mathfrak{su}(n)$.
A convenient choice of inner product is given by the unique (up to a constant multiple) bi-invariant Riemannian metric on $SU(n)$.
This metric is expressed as the right (or left, as the two coincide) translation of the Killing form $K: \mathfrak{su}(n) \times \mathfrak{su}(n) \rightarrow \mathbb{R}$:
\begin{align}
	K(X,Y) = \Tr(X^{\dagger}Y)
\end{align}
The condition that a control $E$ is singular can now be written, by applying formulas (\ref{fm1}, \ref{fm2}), in terms of this inner product:
\begin{align}
K \lb U(T)^{\dagger} \delta U(T), B \rb = 0 \Rightarrow \\ \nonumber
K \lb \int_{0}^{T} \delta E(t) b_t dt, B \rb = 0 \Rightarrow \\ \nonumber
\int_{0}^{T} \delta E(t) K \lb b_t, B \rb dt = 0
\end{align}
In scenarios where the set of $\delta E$ considered is large enough, one can apply the fundamental lemma of the calculus of variations and conclude that:
\begin{align}
	\label{fund}
	K \lb b_t, B \rb = 0 \ \ \forall t \in [0,T]
\end{align}
The interpretation of this equation is that infinitesimally varying $E(t)$ doesn't in turn vary $U(T,0)$ in the direction $B$ for any time during the evolution.
From this result, the form of the singular controls can be deduced by differentiating w.r.t $t$.
\begin{align}
	\frac{d}{dt} K \lb Ad_{U(t)}(b), B \rb & = 0 \ \ \forall t \in [0,T] \Rightarrow \\ \nonumber
	K \lb Ad_{U(t)}(ad_{b}(a+E(t)b)), B \rb & = 0 \ \ \forall t \in [0,T] \Rightarrow \\ \nonumber
K \lb Ad_{U(t)}(ad_{b}(a)), B \rb & = 0 \ \ \forall t \in [0,T]
\end{align}
Noting that $E$ has dropped out as $ad_b(b)=0$, one differentiates again w.r.t $t$ to find: 
\begin{align}
\label{singcon}
E(t) & = -\frac{K\lb Ad_{U(t)}(ad_{a}(ad_b(a))), B\rb}{K\lb Ad_{U(t)}(ad_{b}(ad_b(a))), B\rb} 
\end{align}
assuming that the denominator is never zero.
If it is zero, further differentiation is required which yields another expression containing higher levels of nested commutators.
Another formula can also be obtained by applying a symmetry of the Killing form to obtain from (\ref{fund})
\begin{align}
	K \lb b, B_t \rb = 0 \ \ \forall t \in [0,T]
\end{align}
where $B_t$ is defined similarly to $b_t$ above.
Closely following \cite{Wu}, the form of the singular controls can be similarly determined in the case of controlling the density matrix.

From formula (\ref{singcon}), one sees that not all singular controls are constant.
One also observes that there is no reason, \emph{a priori}, to preclude the possibility of encountering singular controls during gradient ascent for systems in the dipole approximation.
However, formula (\ref{singcon}) also indicates that there are very few singular controls relative to the size of the total control space.
This can be deduced from the one-to-one correspondence between $B \in \mathfrak{su}(n)$, and singular controls.
Given that the dimension of $\mathfrak{su}(n)$ is $n^2-1$, and the control space typically will have far larger dimension for an $n$ level system, this indicates that singular controls are not prevalent in the control space.
If the Hamiltonian contains quadratic coupling to a single control field, a similar implicit formula for a singular control can be found \cite{mepol} by a like procedure.

\section{Second Order Analogue of the Central Theorem \ref{tcthm}}
\label{app2}

The parametric transversality theorem can be applied to the Hessian of the fidelity by defining a new map $Q_T: TC \rightarrow TSU(n)$:
\begin{align}
Q_T(E, \delta E) := (V_T(E), d V_T\big|_{E} (\delta E))
\end{align}
where $d V_T$ is the push-forward of the map $V_T$.
This approach is appropriate for showing that only a null set of systems, which possess singular critical points, possess \emph{second order} critical points.
Another definition is now needed:
\begin{definition}
	Given two vector spaces $K,L$ define $\Sigma := \{M \in \text{Lin}(K,L) \ \big| \ \text{rank}(M) < \text{dim}(L) \}$.
That is $\Sigma$ is the set of rank deficient linear maps from $K$ to $L$.
\end{definition}
It is now possible to state the condition that a given quantum system has no second order critical points in terms of transversality of the map $Q$ to a specific submanifold $L$ of the range of the same map.

If $S \subset SU(N)$ is the submanifold constructed from the union of all the level sets (of J) which contain a singular critical value of $V_T$, define a submanifold of $L \subset TSU(n)$ as:
\begin{align}
L = S\times \Sigma
\end{align}
The condition that there are no second order critical points now reads: $Q_T \pitchfork L$, which renders it amenable to assessment via application of the parametric transversality theorem.
Checking for which systems this condition holds, and similarly for higher order conditions, will form the basis of further work.

\section{Singularities of the Matrix Exponential}
\label{meapp}

The matrix exponential $\text{exp}: A \mapsto e^{A}$ possess singularities at, and only at, any matrix $A$ for which $ad_{A} = [A, \cdot]$ has $2\pi n i$ as an eigenvalue for any $n\in \mathbb{N}$.
In any quantum system with a piecewise constant Hamiltonian (with $K$ pieces of duration $\delta t$ labeled as $H^{(k)}$, which are not time dependent), the propagator $U_T$ can be expressed as a product of exponentials:
\begin{align}
U_T = e^{\delta t i H^{(K)}} \cdots e^{\delta t i H^{(1)}}
\end{align}
The eigenvalues of $ad_{A}(\cdot)$ are $a_{n,m} = a_n - a_m$ where $a_n$ are the eigenvalues of $A$ \cite{fulton} for any square matrix $A$.
If the difference of the highest and lowest magnitude eigenvalue of $\delta t i H^{(n)}$ (which equals the magnitude largest eigenvalue of $ad_{\delta t i H^{(n)}}$) is less that $2 \pi$, then the map $H^{(n)} \mapsto e^{\delta t i H^{(n)}}$ is locally surjective.
If all $H^{(n)}$ meet this premise, then clearly the overall end-point map is surjective.
If a given $H^{(n)}$ does not meet the premise, then reducing $\delta t$ sufficiently will restore the premise.
Specifically, systems such that:
\begin{align}
\delta t < \frac{2 \pi}{\big|E^{(n)}_{\text{max}} - E^{(n)}_{\text{min}} \big|}
\end{align}
where $E^{(n)}_{\text{max}}$ is the largest eigenval of $H^{(n)}$ (and $E_{\text{min}}$ similarly) for all $N\in \{1, K\}$, will have no singularities.

More generally, the Lie theoretic exponential map, $\exp(\cdot)$, has the property that it is invertible (i.e., a diffeomorphism) in a neighborhood of $0$, i.e., $d \exp$ is invertible near to zero (Corollary 3.44, \cite{hall2003}).
This indicates that it is free from singularities, and thus that for piecewise constant controls, with sufficiently small pieces, the end-point map is also free from singularities.

\subsection{The Relationship between Singular Trajectories when Controlling the Propagator and when Controlling a Quantum State}
\label{geocon}

It has been noted in \cite{Wu} that there exist controls for systems of the type (\ref{schcon}) which are singular for the control of the the propagator and not for the control a quantum state.
Here we discuss the geometric basis of this finding using the construction of complex protective space as the quantum state space of pure states is a homogeneous space of $SU(n)$.
We also discuss why this construction has no analogue in the case of a mixed state.

In geometric quantum mechanics the space of pure states, up to equivalence of states by normalization and global phase, is a manifold known as a complex projective space \cite{GQM}.
For other applications of this construction in quantum control see \cite{me0, me1, me2}.
This relationship can be formally expressed as:
\begin{align}
	\mathbb{C}P^{n-1} \cong SU(n)/U(n-1)
\end{align}
where the $/$ symbol refers to the quotient of $SU(n)$ into $U(n-1)$ cosets.
The special case of the space $\mathbb{C}P^1$, is the familiar Bloch sphere for a qubit.
This construction is only possible because $SU(n)$ acts transitively on $\mathbb{C}P^{(n-1)}$, see \cite{EP} for details of this construction.
As unitary evolution preserves the degree of mixedness of a mixed state, $SU(n)$ does not act transitively on the manifold of all density matrices representing mixed states.
Due to this lack of transitive action (i.e., not all states can be transformed into all others via unitary evolution), this space cannot be represented as a homogeneous space of $SU(n)$.
Because of this circumstance, it is not as straight forward to form an analogous construction for mixed states.

The space where the quantum state resides is a quotient of the space where the time evolution operator resides.
Because of this, each direction in the tangent space at some point in $\mathbb{C}P^{n-1}$ corresponds to more than one direction in the tangent space at some point in $SU(n)$.
As such, the existence of a direction $\delta V_T \in T_{U}SU(n)$, which does not correspond to any admissible variation $\delta E$, is not sufficient for such a direction to exist in $T \mathbb{C} P^{(n-1)}$ when the time evolution of the state is standardly defined by $|\psi_t \ra = U_t| \psi_0 \ra$.
Thus, not all controls of the form (\ref{singcon}) are singular for the control of a pure state.
This construction provides insight into which controls can be singular for the control of a pure state, as it reveals the exact sense of: several directions $\delta U_T$ in which $U_T$ and be steered by variation $\delta E$ of $E$ corresponding to a given variation $\delta |\psi_T \rangle$ of $|\psi_T\rangle$.
Thus, it is clear that not every control which is singular for the control of the propagator is singular for the control a pure state.
The practical significance of this result is that if the landscape for the control of the propagator $U_T$ is trap free, then the landscape for the control of the state $|\psi \rangle$ is also trap free.

\end{document}